\documentclass[11pt]{article}
\usepackage[margin=1in]{geometry}
\usepackage{complexity}
\usepackage{mathtools,amsmath,amssymb}
\usepackage{xspace}
\RequirePackage[colorlinks=true]{hyperref}
\hypersetup{
  linkcolor=[rgb]{0,0,0.5},
  citecolor=[rgb]{0, 0.5, 0},
  urlcolor=[rgb]{0.5, 0, 0}
}
\usepackage{dsfont}

\usepackage{amsthm}
\usepackage{thmtools,thm-restate}

\numberwithin{equation}{section}
\declaretheoremstyle[bodyfont=\it,qed=\qedsymbol]{noproofstyle}

\declaretheorem[name=Observation,numbered=no]{observation*}

\declaretheorem[numberlike=equation]{theorem}

\declaretheorem[name=Theorem,numbered=no]{theorem*}

\declaretheorem[numberlike=equation]{lemma}
\declaretheorem[name=Lemma,numbered=no]{lemma*}

\declaretheorem[name=Corollary,numbered=no]{corollary*}

\declaretheorem[numberlike=equation]{proposition}
\declaretheorem[name=Proposition,numbered=no]{proposition*}

\declaretheorem[name=Claim,numbered=no]{claim*}

\declaretheorem[name=Conjecture,numbered=no]{conjecture*}

\declaretheorem[name=Question,numbered=no]{question*}

\declaretheoremstyle[bodyfont=\it,qed=$\lozenge$]{defstyle} 

\declaretheorem[numberlike=equation,style=defstyle]{definition}
\declaretheorem[unnumbered,name=Definition,style=defstyle]{definition*}

\declaretheorem[unnumbered,name=Example,style=defstyle]{example*}

\declaretheorem[unnumbered,name=Notation=defstyle]{notation*}

\declaretheorem[numberlike=equation,style=defstyle]{construction}
\declaretheorem[unnumbered,name=Construction,style=defstyle]{construction*}

\declaretheorem[unnumbered,name=Remark,style=defstyle]{remark*}


\usepackage{nth}
\usepackage{intcalc}
\usepackage{etoolbox}
\usepackage{xstring}
\hypersetup{
}

\usepackage{ifpdf}
\ifpdf
\else
\usepackage[quadpoints=false]{hypdvips}
\fi

\newcommand{\shortECCC}[2]{\texttt{\href{http://eccc.weizmann.ac.il/report/\ifnumcomp{#1}{>}{93}{19}{20}#1/#2/}{eccc:TR#1-#2}}}

\newcommand{\parseECCC}[1]{
\StrSubstitute{#1}{TR}{}[\tmpstring]%
\IfSubStr{\tmpstring}{/}{ 
\StrBefore{\tmpstring}{/}[\ecccyear]%
\StrBehind{\tmpstring}{/}[\ecccreport]%
}{
\StrBefore{\tmpstring}{-}[\ecccyear]%
\StrBehind{\tmpstring}{-}[\ecccreport]%
}%
\shortECCC{\ecccyear}{\ecccreport}}

	\renewcommand{\vec}[1]{{\mathbf{#1}}}

	\makeatletter
	\newcommand{\va}{{\vec{a}}\@ifnextchar{^}{\!\:}{}}
	\newcommand{\vb}{{\vec{b}}\@ifnextchar{^}{\!\:}{}}
	\newcommand{\vc}{{\vec{c}}\@ifnextchar{^}{\!\:}{}}
	\newcommand{\vd}{{\vec{d}}\@ifnextchar{^}{\!\:}{}}
	\newcommand{\ve}{{\vec{e}}\@ifnextchar{^}{\!\:}{}}
	\newcommand{\vy}{{\vec{y}}\@ifnextchar{^}{\!\:}{}}
	\newcommand{\vs}{{\vec{s}}\@ifnextchar{^}{\!\:}{}}
	\newcommand{\vt}{{\vec{t}}\@ifnextchar{^}{\!\:}{}}
	\newcommand{\vx}{{\vec{x}}\@ifnextchar{^}{}{}}		
	\newcommand{\vz}{{\vec{z}}\@ifnextchar{^}{\!\:}{}}

	\newcommand{\vY}{{\vec{Y}}\@ifnextchar{^}{\!\:}{}}
	\newcommand{\vX}{{\vec{X}}\@ifnextchar{^}{}{}}		
	\newcommand{\vZ}{{\vec{Z}}\@ifnextchar{^}{\!\:}{}}
	\newcommand{\vG}{{\vec{G}}\@ifnextchar{^}{\!\:}{}}
	
	\makeatother

\newcommand{\cC}{{\mathcal{C}}}

\newcommand{\cH}{{\mathcal{H}}}

\newcommand{\F}{\mathbb{F}}

\newcommand{\set}[1]{\left\{#1\right\}}
\newcommand{\abs}[1]{\left|#1\right|}

\newcommand{\INW}{\mathrm{INW}}
\newcommand{\lin}{\mathrm{lin}}

\def\epsilon{\varepsilon} 
\let\eps\epsilon

\newcommand*\samethanks[1][\value{footnote}]{\footnotemark[#1]}

\date{}

\title{Pseudorandom Bits for Oblivious Branching Programs}
\author{
Rohit Gurjar\thanks{Department of Computer Science, Tel Aviv University, Tel Aviv, Israel, E-mails: \texttt{rohitgurjar0@gmail.com, benleevolk@gmail.com}.}
\and%
Ben Lee Volk\samethanks[1]
}
\begin{document}
\maketitle

\begin{abstract}
We construct a pseudorandom generator which fools read-$k$ oblivious branching programs and, more generally, any linear length oblivious branching program, assuming that the sequence according to which the bits are read is known in advance. For polynomial width branching programs, the seed lengths in our constructions are $\tilde{O}(n^{1-1/2^{k-1}})$ (for the read-$k$ case) and $O(n/ \log \log n)$ (for the linear length case). Previously, the best construction for these models required seed length $(1-\Omega(1))n$.
\end{abstract}

\thispagestyle{empty}
\newpage
\pagenumbering{arabic}

\section{Introduction}\label{sec:intro}

A \emph{Pseudorandom Generator} (PRG, for short), for a class of boolean functions $\cC$, is a family of efficiently computable functions $G_n : \set{0,1}^s \to \set{0,1}^n$ which fools functions in the class $\cC$, in the sense that for all $f : \set{0,1}^n \to \set{0,1}$ in $\cC$,
\[
\abs{ \Pr_{x \sim U_s} [ f(G_n(x)) = 1] - \Pr_{x \sim U_n} [ f(x) =1 ]} \le \eps,
\]
where $U_k$ denotes the uniform distribution on $\set{0,1}^k$, and $s=s(n,\eps)$ is called the \emph{seed length}.

A long line of work in complexity theory studies construction of PRGs for restricted classes of functions. One concrete motivation for these results is obtaining a ``black box'' derandomization of randomized algorithms from these restricted classes. More generally, this research program is aimed at shedding light on the power of randomness in computation in general, and on the limits of the use of randomness in algorithms with bounded resources.

One notable example in this area in Nisan's PRG for logarithmic space machines \cite{Nisan92}: Nisan constructed a PRG with seed length $O(\log^2 (n))$ which fools $\RL$ machines, that is, logarithmic space machines with read-once access to its randomness. More generally, Nisan's PRG also works in the non-uniform setting, and fools any function which is computed by a small width \emph{read once oblivious branching program} (see \autoref{sec:models} for a formal definition). 

In this more general setting, the seed length of Nisan's generator is $O(\log(n) \cdot (\log(n/\eps) + \log(w)))$, where $w$ denotes the \emph{width} of the branching program. Impagliazzo, Nisan and Wigderson \cite{INW94} gave a different construction with matching parameters, but to this day, and despite a large body of work on this topic, there is no better construction known for this model.

In the lack of better results, one possible avenue for improvement would be to obtain improved bounds in more restricted settings. One such challenge is to obtain an improved seed length in the bounded width case, i.e., when $w=O(1)$. Indeed, some progress was made in this setting, assuming more restrictive properties on the branching program (\cite{BRRY14, De11, KNP11, RSV13, Steinke12, SVW14}).

Another way to extend these results is to obtain PRGs against stronger models of computation. The saving in randomness in the works of Nisan \cite{Nisan92} and Impagliaazo, Nisan and Wigderson \cite{INW94} follows from the fact that in the execution of a read-once branching program on a specific input, each bit is accessed only once, and furthermore, the order of access is known in advance to the designer of the PRG. It is natural to ask to what extent these restrictions can be removed, en route to constructing PRGs against more general classes of computation. In \autoref{sec:related}, we review some of the progress made in this setting.

\subsection{Algebraic vs.\ Boolean Pseudorandomness}

We now make a small detour and review some relevant results from algebraic complexity. Polynomial Identity Testing (PIT) is the problem of deciding, given an algebraic computation device which computes a formal polynomial using the arithmetic operations $+$ and $\times$, whether it computes the zero polynomial. This problem admits an easy randomized algorithm which follows from the Schwartz-Zippel-DeMillo-Lipton Lemma \cite{Z79, S80, DL78}, and it is a major open problem to find an efficient deterministic algorithm, even for restricted classes of algebraic computation.

The algebraic analog of constructing PRGs is black-box identity testing: here, the goal is to construct a \emph{hitting set}, which is a small and efficiently constructible set $\cH$ such that for every non-zero polynomial $f$ in the class, there exists $\alpha \in \cH$ such that $f(\alpha) \neq 0$. It is not hard to show (see, e.g., \cite{sy}) that this is equivalent to constructing a \emph{generator}, which is a polynomial map $G : \F^s \to \F^n$ of small degree, such that for every non-zero $f$, $f \circ G$ is not the zero polynomial. The quality of the generator is measured by the seed length $s$ and the degree of the polynomial map $G$.

The algebraic analog of a read-once oblivious branching program is a model called read-once oblivious \emph{algebraic} branching programs (ROABPs). We omit the exact definition of this model from this informal introduction. Forbes and Shpilka \cite{FS13} obtained a hitting set of quasi-polynomial size for this model, or equivalently, a generator whose number of variables $s$ is $O(\log n)$, and whose degree is polynomial in the number of variables $n$, the width $w$ and the degree $d$ of the ROABP. Quantitatively, this is comparable to Nisan's generator,
and indeed, the intuition behind the Forbes-Shpilka generator is similar to Nisan's proof.

However, it interesting to note that both the challenges that were mentioned earlier in the context of boolean pseudorandomness have been met in the algebraic world: Forbes, Shpilka and Saptharishi \cite{FSS14} obtained a hitting set of quasi-polynomial size which works even when the order in which the variables are read is unknown. The construction was later improved by Agrawal, Gurjar, Korwar and Saxena \cite{AGKS15}, whose hitting set size matches the hitting set for the known order case.

In the bounded width setting, Gurjar, Korwar and Saxena \cite{GKS17} obtained a hitting set of polynomial size (over characteristic $0$, and assuming the order is known) by leveraging intuition for the INW generator \cite{INW94}.

It is thus interesting to see to what extent the progress in the algebraic world can help in obtaining improved PRGs for boolean computational devices.

\subsection{Results and Techniques}

In \cite{AFSSV16}, Anderson et al.\ obtained a subexponential time PIT algorithm for the model of read-$k$ oblivious algebraic branching program. Here, we adapt their techniques to the boolean analog of this model, and prove the following.

\begin{theorem}
\label{intro:thm:readk}
For every $k \ge 2$, there exists an efficiently computable function $G:\set{0,1}^s \to \set{0,1}^n$, where
\[
s = O \left( \exp(k^2) \cdot n^{1-1/2^{k-1}} \cdot \log(n) \cdot (\log(n/\eps) + k \log w) \right),
\]
which $\eps$-fools every function $f$ which is computable by a width-$w$ oblivious read-$k$ branching program, when the sequence according to which the variables are read in the branching program is known in advance.
\end{theorem}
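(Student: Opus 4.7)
The plan is to proceed by induction on $k$. The base case $k=1$ is immediate from Nisan's generator \cite{Nisan92}, whose seed length $O(\log n \cdot (\log(n/\eps) + \log w))$ matches the claimed bound (noting that $n^{1-1/2^{k-1}} = 1$ when $k=1$). For the inductive step, I would assume a generator $G_{k-1}$ with seed length $s_{k-1}$ that $(\eps/2)$-fools every width-$w$ read-$(k-1)$ oblivious branching program on $n$ variables, and use $G_{k-1}$ to construct $G_k$.

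Following the algebraic template of Anderson et al.~\cite{AFSSV16}, the heart of the construction is a \emph{structural decomposition} of the known read sequence $\pi : [T] \to [n]$ (with $T \le kn$). The aim is to exhibit a ``separator'' set $S \subseteq [n]$ of size $\abs{S} = \tilde{O}\bigl(\sqrt{n \cdot s_{k-1}}\bigr)$ with the property that, once the $S$-variables are assigned arbitrary values and the corresponding positions of $\pi$ are spliced out, the residual computation of the BP on $[n] \setminus S$ is, up to a mild width blow-up, a read-$(k-1)$ oblivious branching program in the induced order. Given such a separator, the generator $G_k$ is the simple concatenation of $\abs{S}$ truly random bits (used to assign the $S$-variables) with an independent copy of $G_{k-1}$ (used for the residual variables).

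Correctness follows from a standard hybrid argument. For each fixing $\alpha \in \set{0,1}^{\abs{S}}$ of the separator variables, the inductive hypothesis guarantees that $G_{k-1}$ $(\eps/2)$-fools the induced read-$(k-1)$ BP $P_\alpha$; averaging over a uniformly random $\alpha$ then yields the full $\eps$-fooling guarantee for $P$. The seed-length recurrence $s_k \le \abs{S} + s_{k-1} = \tilde{O}\bigl(\sqrt{n \cdot s_{k-1}}\bigr)$ unrolls to $\tilde{O}\bigl(n^{1-1/2^{k-1}}\bigr)$, and the $\exp(k^2)$, $\log n$, and $k\log w$ factors in the statement absorb, respectively, the multiplicative logarithmic losses compounded over the $k$ levels of recursion, the overhead of Nisan's generator at the base, and the successive width blow-ups incurred when identifying each induced BP with a read-$(k-1)$ BP.

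The main technical obstacle, I expect, is the structural lemma itself --- showing that a separator of size $\tilde{O}\bigl(\sqrt{n \cdot s_{k-1}}\bigr)$ exists for every read-$k$ sequence. The naive reading of ``effective read count $k-1$'' fails: up to $\Theta(n)$ variables may be read exactly $k$ times in $\pi$, so $S$ cannot simply collect all variables with too many reads. The resolution should lie in a more permissive notion of ``effective read'': for instance, allowing the residual BP to be expressed as a read-$(k-1)$ program only after coarsening $\pi$ (grouping short bursts of consecutive reads of the same variable into a single super-read) or after tolerating a controlled width expansion absorbed into the $k\log w$ term. Establishing this lemma --- most likely via an averaging or greedy combinatorial argument on the clustered pattern of each variable's (at most $k$) occurrences in $\pi$, in the spirit of the analysis in \cite{AFSSV16} --- and obtaining the precise bound on $\abs{S}$ that gives the $n^{1-1/2^{k-1}}$ exponent is where I expect the bulk of the proof effort to go.
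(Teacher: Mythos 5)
Your averaging/hybrid step and the $k=1$ base case are fine (the averaging argument is essentially \autoref{lem:hybrid} with $D'$ uniform), but the proof hinges entirely on the structural lemma you defer to the end, and that lemma is not just the hard part --- as stated it is false in the boolean setting, so the induction never gets off the ground. Fixing the variables in a separator set $S$ and splicing out their positions does nothing to reduce the read-multiplicity of the \emph{surviving} variables: each of them is still read $k$ times in the induced order. Already for $k=2$, take the order $x_1,\ldots,x_n,x_{\pi(1)},\ldots,x_{\pi(n)}$ with $\pi$ a generic permutation; after fixing any $o(n)$ variables, almost every remaining variable still has its two occurrences separated by many occurrences of other surviving variables, so no ``coarsening into super-reads'' is available, and you cannot reorder or merge the reads of a boolean oblivious branching program at the cost of only a ``mild width blow-up'' --- the paper's address-function example is exactly a demonstration that such re-reading/re-ordering simulations, which are cheap for ROABPs in the algebraic world, cost exponential width for boolean branching programs. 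So the recurrence $s_k=\tilde{O}(\sqrt{n\, s_{k-1}})$ has no lemma behind it; it reproduces the right exponent only coincidentally.

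The paper's actual argument never reduces read-$k$ to read-$(k-1)$. Instead it partitions \emph{all} of $X$ into $t \le \exp(k^2)\, n^{1-1/2^{k-1}}$ parts $Y_1,\ldots,Y_t$ such that each induced subsequence $S|_{Y_i}$ is per-read-monotone and $k$-regularly-interleaving (\autoref{thm:good-subsequences}, the iterated Erd\H{o}s--Szekeres theorem of \cite{AFSSV16}); it then shows such a subsequence can be traversed by a Turing-machine head visiting every cell at most $2k$ times (\autoref{lem:good-for-INW}), so the multi-read INW generator (\autoref{thm:INW-gen}) fools the program induced on $Y_i$ under \emph{any} fixing of the other variables. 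The final generator is an independent copy of $G^{\INW}$ per part, and the seed length is $t$ times a polylogarithmic INW seed --- the sublinear factor comes from the number of parts, not from a separator recurrence. If you want to salvage your outline, the statement you must prove is this partition theorem (or an equivalent), not the existence of a small set whose fixing lowers the read count.
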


The saving in randomness is more noticeable when $k$ is small. However, by exploiting the fact that the bound on $s$ remains sublinear even for slightly superconstant $k$, we can prove the following.

\begin{theorem}
\label{intro:thm:linear}
There exists an efficiently computable function $G:\set{0,1}^s \to \set{0,1}^n$, for $s=O(n / \log \log n)$, which $\eps$-fools every function $f$ which is computable by an oblivious branching program of length $O(n)$ and width $\poly(n)$, when the sequence according to which the variables are read in the branching program is known in advance.
\end{theorem}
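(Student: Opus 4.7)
The plan is to reduce Theorem~\ref{intro:thm:linear} to Theorem~\ref{intro:thm:readk} by partitioning the $n$ variables into a ``heavy'' and a ``light'' bucket based on how often each is read, using truly random bits for the former and the read-$k$ generator of Theorem~\ref{intro:thm:readk} for the latter.

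Concretely, given an oblivious branching program of length $L\le cn$ with known variable sequence $(v_1,\ldots,v_L)$, and a threshold $k=k(n)$ to be fixed later, I would define $H\subseteq[n]$ to be the set of variables that appear more than $k$ times in $(v_1,\ldots,v_L)$. Since the total number of reads is $L=O(n)$, a pigeonhole argument gives $|H|\le L/k = O(n/k)$. The generator outputs a string in $\{0,1\}^n$ by independently (i)~sampling $|H|$ uniformly random bits to assign values to the variables in $H$, and (ii)~running the read-$k$ PRG of Theorem~\ref{intro:thm:readk} with $n-|H|$ variables, width $w$, and the sub-sequence of $(v_1,\ldots,v_L)$ obtained by deleting all occurrences of $H$-variables, to assign values to the remaining variables.

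For the fooling property, I would condition on an arbitrary outcome $y_H$ of step~(i). Substituting $y_H$ into the branching program yields an oblivious read-$k$ branching program of width $w$ over the variable set $[n]\setminus H$ in the induced order (each surviving variable is read at most $k$ times by construction), so Theorem~\ref{intro:thm:readk} guarantees that step~(ii) $\eps$-fools this restricted program. Averaging over $y_H\sim U_{|H|}$ then gives that the combined generator $\eps$-fools the original branching program.

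The last step is to choose $k$ to simultaneously balance $|H|\le O(n/k)$ and the read-$k$ seed length $O(\exp(k^2)\cdot n^{1-1/2^{k-1}}\cdot \log^{O(1)} n)$ (using $w \le n^{O(1)}$ and $\eps$ constant). The first bound demands $k=\Omega(\log\log n)$; the second demands that $\log n/2^{k-1}$ beat $k^2$ by an additive $\omega(\log\log n)$, which forces $k$ slightly below $\log_2\log_2 n$. The choice $k=\lfloor\log_2\log_2 n - 3\log_2\log_2\log_2 n\rfloor$ lies in this narrow window: it yields $|H|=O(n/\log\log n)$ and makes $n^{1-1/2^{k-1}} = n\cdot 2^{-\Omega((\log\log n)^3)}$, which absorbs both the $\exp(k^2)=2^{O((\log\log n)^2)}$ and the polylogarithmic factors, bringing the read-$k$ seed length down to $o(n/\log\log n)$. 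The main obstacle is exactly this tension: the $\exp(k^2)$ overhead in Theorem~\ref{intro:thm:readk} prevents $k$ from reaching $\log_2\log_2 n$, while the $|H|$ count demands $k=\Omega(\log\log n)$, and it is only because the read-$k$ seed length remains sublinear for mildly super-constant $k$ that both constraints can be met.
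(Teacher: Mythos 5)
Your proposal is correct and follows essentially the same route as the paper: split the variables into frequent and infrequent ones by a threshold $k=\Theta(\log\log n)$, feed the $O(n/k)$ frequent variables truly random bits, run the read-$k$ generator on the rest, and combine via the conditioning/hybrid argument (the paper's \autoref{lem:hybrid}). The only difference is the specific threshold (the paper takes $k=(\log\log n)/2$, which sits comfortably in the admissible window rather than near its edge), and this is immaterial to the bound $s=O(n/\log\log n)$.
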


Our techniques are mostly adaptation of the techniques used by \cite{AFSSV16} in the algebraic setting. To illustrate them, consider first a branching program that reads its variables twice in the order $x_1, x_2, \ldots, x_n, x_1, x_2, \ldots, x_n$. In the algebraic case, is it not very hard to show that such a width $w$ algebraic branching program can be simulated by a width $\poly(w)$ ROABP in the variable order $x_1, \ldots, x_n$, and this fact serves as the starting point of the construction in \cite{AFSSV16}. This fact, however, is no longer true for boolean branching programs. As an example, consider the ``address function'' which receives as an input $y \in \set{0,1}^n$ and $z \in \set{0,1}^{\log n}$, interprets $z$ as an integer in $[n]$ and outputs $y_z$. In the variable order $y_1, y_2, \ldots, y_n, z_1, \ldots, z_{\log n}$, this function requires exponential width, since the branching program essentially has to remember all $y$ bits
before it sees $z$ bits. But if the branching program is allowed to read the input twice in this order, polynomial width suffices.

Fortunately, it turns out that the generator by Impagliazzo, Nisan and Wigderon \cite{INW94} also fools branching programs that read their input twice in the same order. This is essentially because this generator is useful against any model which can be simulated by a ``low communication'' protocol on a ``simple'' network topology (see \autoref{sec:INW} for further discussion). Thus, this model already has a PRG with seed length $\polylog(n)$ (assuming $w=\poly(n)$).

Generalizing a bit further, we can consider branching programs that read their input in the order $x_1, x_2, \ldots, x_n, x_{\pi(1)}, \ldots, x_{\pi(n)}$ for some arbitrary permutation $\pi$. Here, the basic idea in \cite{AFSSV16} was to argue, using the Erd\H{o}s-Szekeres Theorem, that the sequence $x_{\pi(1)}, \ldots, x_{\pi(n)}$ must contain either a monotonically increasing sequence of length $\sqrt{n}$, or a monotonically decreasing sequence of the same length. Assuming that the sequence is increasing (the decreasing case is handled similarly), we obtain a set of $\sqrt{n}$ variables $y_1, \ldots, y_{\sqrt{n}}$ such that the branching program, restricted to only these variables, is exactly of the form required by the previous argument.

We continue inductively to find a (slightly shorter) monotone sequence in the remaining variables. This process can be shown to terminate after $O(\sqrt{n})$ applications of the Erd\H{o}s-Szekeres Theorem,  and the final generator is obtained by applying the INW generator with an independent seed to each of the $O(\sqrt{n})$ sets obtained in this process, for a total seed length of $O(\sqrt{n} \cdot \polylog(n))$. 

Similarly, one can consider read-$k$ branching programs whose reading order is 
\begin{equation}
\label{eq:kpass}
x_1, \ldots, x_n, x_{\pi_1(i)}, \ldots, x_{\pi_1(n)}, \ldots, x_{\pi_{k-1}(1)}, \ldots, x_{\pi_{k-1}}(n),
\end{equation}
for $k-1$ permutations $\pi_1, \ldots, \pi_{k-1}$. By iteratively applying the  Erd\H{o}s-Szekeres Theorem, we can find a sequence of length $n^{1/2^{k-1}}$ which is monotone in each of the $k$ reads, argue as before that the branching program restricted to these variables is fooled by the INW generator, and continue the argument as before. The iterative application of the Erd\H{o}s-Szekeres argument accounts for most of the loss in the parameters, but it is unfortunately unavoidable in this approach (see the discussion in \cite{AFSSV16}).

More generally, we want to handle any sequence in which every variable appears at most $k$ times, even if it is not of the form \eqref{eq:kpass}. To do this, Anderson et al.\ \cite{AFSSV16} defined the notion of a ``$k$-regularly-interleaving sequence'', which we define in \autoref{sec:readkseq}, and enables us to similarly partition the set of variables $X$ into $t$ disjoint sets $Y_1, \ldots, Y_t$, such that the INW fools every branching programs in the variables of $Y_i$ under any restriction of the variables in all other sets, while $t$ remains sublinear in $n$.

\subsection{Related Work}
\label{sec:related}

There are several works that consider the problem of constructing pseudorandom distributions for related models. Here we review some of the related results.

Impagliazzo, Meka and Zuckerman \cite{IMZ12} constructed a very general PRG that fools \emph{every} branching program with $s$ vertices, with seed length $s^{1/2 + o(1)}$. For the case of branching programs of length $O(n)$, the size is $O(w \cdot n)$, and thus this is meaningful only when $w=o(n)$, whereas our result remains non-trivial for any polynomial, and even super-polynomial, width.

Bogdanov, Papakonstantinou and Wan \cite{BPW11, BPW12} constructed explicit PRGs for read-once branching programs and more generally for oblivious branching programs of linear length. In their construction, the seed length is $(1-\Omega(1))n$, whereas our seed length is sublinear in $n$. However, an advantage of their construction is that it works even when the reading order of the bits is unknown, while we require it to be known in advance. Haramaty, Lee and Viola obtained some improved bounds for the related model of \emph{product tests} \cite{HLV17}.

Finally, we discuss the pseudorandom generator of Impagliazzo, Nisan and Wigderson \cite{INW94}, which is also a useful tool in our construction. This work is often cited in the context of derandomizing logarithmic space or read-once oblivious branching programs, but is in fact applicable in other contexts which can be modelled as a network of processors, each flipping its own random bit. The parameters of the generator depend on the ``simplicity'' of the network graph, and the total communication between the processors.

In the common setting of read-once oblivious branching programs, the network consists of $n$ processors, where the $i$-th processor 
reads the random bit $x_i$, and the network graph is a simple path. This graph is simple in the technical sense required by \cite{INW94}, and by the read-once property, the computation of a width $w$ branching program can be simulated by each processor receiving at most one message and sending at most one message, each of length at most $\log(w)$.
The message basically encodes the index of a node in the next layer of the branching program. 

Considering read-twice branching programs, it is clear that the communication remains bounded if, for example, one considers branching programs which read their input twice in the same order (i.e., $x_1, x_2, \ldots, x_n, x_1, x_2, \ldots, x_n$). However, the situation changes when one considers general read-twice sequences: in this setting, it is always possible to bound the communication by allowing a more complicated network structures, e.g., a clique between the $n$ processors which will allow each pair to communicate directly. However, in this case the network structure is no longer ``simple'' in the sense required by \cite{INW94}.

We insist on the network being a 
path, in which case it is convenient to model the variable access of the branching program as a Turing machine head, which can move at each step left or right (but cannot ``jump'' many cells in one step). This approach is also taken by \cite{INW94}, which show that their construction works as long as one can bound the number of times the Turing machine head visit each cell (see \autoref{thm:INW-gen} for a formal statement). Their seed length is proportional to the maximum number of times a cell is visited.
	
To understand the subtleties, it is useful to consider the following read-twice sequence:
\[
x_1, x_2, \ldots, x_n, x_1, x_n, x_2, x_{n-1}, \ldots, x_{n/2}, x_{n/2+1}.
\]
If the order on the Turing machine tape is $(x_1, x_2, \ldots, x_n)$ then 
the Turing machine head reading the sequence would have to visit the $(n/2)$-th cell $\Omega(n)$ times, even though each element only appears twice in the sequence.

In this example, if the designer of the PRG is allowed to look at the sequence of bits --- which is the case in our setting --- they can ``imagine'' that the order of the bits on the tape is   $x_1, x_n, x_2, x_{n-1}, \ldots, x_{n/2}, x_{n/2+1}$, 
so that the above sequence
can be handled by a Turing machine head which reads each cell 3 times, and then instantiate the generator with this order.

Of course, in a more general case there is no guarantee that we can fix a order on the tape which ensures each cell is visited only a small number of times.
In fact, there exists a read-twice sequence $x_1, x_2, \ldots, x_n, x_{\pi(1)}, \ldots, x_{\pi(n)}$ 
for some permutation $\pi$, such that for any fixed order on the tape, there will be a cell which would be visited $\Omega(n)$ times. 
Thus, one cannot hope to directly apply the INW generator. 
To overcome this, we follow Anderson et al.\ \cite{AFSSV16} and partition the set of variables $X$ into $t$ disjoint sets $Y_1, \ldots, Y_t$ (\autoref{thm:good-subsequences}) using Erd\H{o}s-Szekeres Theorem. 
The partition has the property that the given sequence restricted to any part $Y_i$
can be traversed by Turing machine head while visiting each cell a bounded number of times (\autoref{lem:good-for-INW}). 
Thus, it would suffice to plug in independent copies INW generator to each set $Y_i$ (\autoref{lem:hybrid}).

\section{Preliminaries}
\label{sec:prelim}

\subsection{Notation}
Let $[n] = \set{1, 2, \ldots, n}$. For a partition of $[n] = A_1 \sqcup A_2 \sqcup \cdots \sqcup A_r$ with $|A_i| = m_i$, and distributions $D_i$ on $\set{0,1}^{m_i}$, we denote by
\[
D_1^{A_1} \times D_2^{A_2} \times \cdots \times D_r^{A_r}
\]
the distribution on $\set{0,1}^n$ obtained by sampling \emph{independently} a vector $\vb_i \in \set{0,1}^{m_i}$ from $D_i$, for $i \in [r]$, and obtaining a string $\vb \in \set{0,1}^n$ by plugging $\vb_i$ in the coordinates indexed by $A_i$.

This notation is also used for functions in a natural way: if $G_i : \set{0,1}^{s_i} \to \set{0,1}^{m_i}$ is any function, $G_1^{A_1} \times \cdots \times G_r^{A_r}$ is a function from $\set{0,1}^{s_1 + \cdots + s_r}$ to $\set{0,1}^n$ obtained by applying $G_1$ to the first $s_1$ input bits and plugging the result in the coordinates of $A_1$, and so on.

We often consider models which compute boolean functions over a variable sets $X = \set{x_1, \ldots, x_n}$ and various partitions of the sets of variables $Y_1 \sqcup Y_2 \sqcup \cdots \sqcup Y_r$. By considering the indices of the variables in each set, such a partition corresponds naturally to a partition of $[n]$ and thus we use similar notations as above with the $Y_i$'s in the superscript.

\subsection{Computational Models}
\label{sec:models}
A branching program $B$ on a variable set $X=\set{x_1, \ldots, x_n}$ is a directed acyclic graph, with a unique source vertex, two sink vertices labeled ``accept'' and ``reject'', and where every non-sink vertex is labeled by one of the $n$ variables and has exactly two outgoing edges, labeled $0$ and $1$. $B$ naturally define a boolean function $B : \set{0,1}^n \to \set{0,1}$ by considering the path an input $x$ induces in the graph. In our case, the branching programs will be \emph{layered}, that is, the vertex set can be partitioned into $m$ layers, with every edge going from layer $i-1$ to $i$.

Such a branching program is said to be \emph{oblivious} if on every layer, all the vertices are labelled by the same variable, namely, the program reads its input in a fixed order which is independent in the value it has read so far. An oblivious branching program is said to be also \emph{read-once} if every variable appears as a label in at most one layer, and more generally \emph{read-$k$} if every variable appears in at most $k$ layers.\footnote{The modifiers read-once and read-$k$ can be used, and have been used, also in the context of non-oblivious branching programs. 
In the more general context, one has to distinguish between a syntactic definition and a semantic definition. As the two definitions coincide in the oblivious case, which is the only case we consider, we omit this discussion.
} Without loss of generality, we may assume each variable is read exactly $k$ times.

\subsection{Read-$k$ sequences}
\label{sec:readkseq}

Let $X = \set{x_1, \ldots, x_n}$, and $S \in X^m$ be a sequence of $m$ elements from $X$. $S$ is said to be a read-$k$ sequence over $X$ if every element $x \in X$ appears exactly $k$ times (and in that case $m=nk$). For a set $Y \subseteq X$ we let $S|_{Y}$ denote the subsequence of $S$ which is obtained by keeping only the elements in $Y$, and erasing all other elements. For $i \in [k]$, we denote by $S^{(i)}$ the subsequence of $S$ which consists of the $i$-th occurrences of the elements. In other words, $S^{(i)} \in X^n$ is a permutation of $X$ according to the order of their $i$-th occurrences. Without loss of generality and by renaming variables, if necessary, we always assume $S^{(1)} = (x_1, \ldots, x_n)$ is the identity permutation. Similarly, for $i \neq j \in [k]$, we use the notation $S^{(i,j)}$ for the subsequence of $S$ which consists of the $i$-th and $j$-th occurrences of all elements. We also associate a natural linear order on $X$ by letting $x_1 < x_2 < \cdots < x_n$.

We now cite relevant definitions from \cite{AFSSV16}. We begin with the definition of a per-read-monotone sequence.

\begin{definition}
A read-$k$ sequence is said to be \emph{per-read-monotone} if for all $i \in [k]$, $S^{(i)}$ is either monotonically increasing or monotonically decreasing.
\end{definition}
Similarly, a read-$k$ sequence is said to be \emph{per-read-increasing (or per-read-decreasing)} if for all $i \in [k]$, $S^{(i)}$ is monotonically increasing (or decreasing).

\begin{definition}
A read-$2$ sequence $S$ over $X=\set{x_1,\ldots,x_n}$ is said to be \emph{2-regularly-interleaving} if there exists a partition $X=X_1 \sqcup X_2 \sqcup \cdots \sqcup X_t$, such that for every $i \in [t]$, the following two conditions hold:
\begin{enumerate}
\item The sequence $S$ can be partition into $t$ read-2 sequences $\set{S_i}_{i \in [t]}$ such that $S_i \in X_i^{2|X_i|}$, and $S=(S_1, \ldots, S_t)$ is the concatenation of $S_1, \ldots, S_t$.
\item Each $S_i$ as above can be partitioned into two subsequences $S_{i,1}$ and $S_{i,2}$, such that for $c \in \set{1,2}$, $S_{i,c}$ contains the $c$-th occurrences of $X_i$, and $S_i$ equals the concatenation of $S_{i,1}$ and $S_{i,2}$.
\end{enumerate}

A read-$k$ sequence is \emph{$k$-regularly-interleaving} if for all $i \neq j \in [k]$, the subsequence $S^{(i,j)}$ is 2-regularly-interleaving.
\end{definition}

The following theorem was proved in \cite{AFSSV16}. Roughly, it says that for small $k$, every read-$k$ sequence can be partitioned into a sublinear number of subsequences, each of which is per-read-monotone and $k$-regularly-interleaving.

\begin{theorem}[\cite{AFSSV16}]
\label{thm:good-subsequences}
Let $S$ be a read-$k$ sequences over $X=\set{x_1, \ldots, x_n}$. Then, $X$ can be partitioned into $t$ disjoint subsets $Y_1 \sqcup Y_2 \sqcup \cdots \sqcup Y_t$, such that 
\begin{enumerate}
\item The subsequence $S_i=S|_{Y_i}$ is per-read-monotone and $k$-regularly interleaving.
\item $t \le \exp(k^2) \cdot n^{1-1/2^{k-1}}$.
\end{enumerate}
Further, this partition can be computed, given $S$, in time $\poly(k,n)$.
\end{theorem}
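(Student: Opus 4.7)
The plan is to construct the partition by iterative extraction: starting with $X$ and maintaining a working set $X'$ of unassigned variables, I repeatedly find a subset $Y \subseteq X'$ such that $S|_Y$ is per-read-monotone and $k$-regularly-interleaving, add $Y$ to the partition, and update $X' \gets X' \setminus Y$. The key claim I would establish is that at every stage such a $Y$ can be found with $|Y| \ge |X'|^{1/2^{k-1}}/\exp(O(k^2))$. A standard summation then shows that repeatedly removing subsets of this size empties $X'$ after at most $\exp(O(k^2)) \cdot n^{1-1/2^{k-1}}$ iterations, giving the claimed bound on $t$.

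To implement the extraction, I would first secure per-read-monotonicity by iterated Erd\H{o}s-Szekeres. Process the reads in order $j = 2, 3, \ldots, k$ (recall that $S^{(1)}$ is assumed to be the identity permutation); at step $j$, apply the Erd\H{o}s-Szekeres theorem to $S^{(j)}$ restricted to the current working subset, keeping a monotone (either increasing or decreasing) subsequence of length at least the square root of the current size. After $k-1$ rounds I obtain $Y^* \subseteq X'$ with $|Y^*| \ge |X'|^{1/2^{k-1}}$ on which $S|_{Y^*}$ is per-read-monotone.

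The harder step is to refine $Y^*$ further so that $S|_{Y^*}$ is also $k$-regularly-interleaving. Per-read-monotonicity alone does not suffice: the read-$2$ sequence $x_1,x_2,x_1,x_3,x_2,x_3$ is per-read-increasing, but $\{x_1,x_2,x_3\}$ admits no partition meeting the $2$-regularly-interleaving definition. Unpacking the definition, I would need to enforce, for every pair of distinct reads $i < j$, that $S^{(i,j)}|_{Y^*}$ admits the required block decomposition of the associated interval system $\{[p_\alpha^{(i)},p_\alpha^{(j)}]:\alpha \in Y^*\}$ into groups of pairwise-intersecting intervals separated by gaps. I would achieve this by a further sequence of refinements --- one per pair $(i,j)$ --- each selecting a large sub-subset whose interval system admits such a decomposition, via another application of Erd\H{o}s-Szekeres on interval endpoints or an explicit interval-partitioning argument. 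The total of $\binom{k}{2}$ such refinements would cost a multiplicative factor of at most $\exp(O(k^2))$ in the subset size, matching the bound stated in the extraction lemma.

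The main obstacle is precisely this refinement step: one must show that any per-read-monotone sequence always admits a large sub-subset satisfying the pairwise block-structure conditions, and one must control the loss across all $\binom{k}{2}$ pairs so that the composed refinements still leave a set of size $|X'|^{1/2^{k-1}}/\exp(O(k^2))$. Once this is settled, all ingredients (iterated Erd\H{o}s-Szekeres, interval-partitioning refinements) are polynomial-time computable, so the construction of the partition runs in $\poly(k,n)$ time as claimed.
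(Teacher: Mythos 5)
First, note that the paper you are working from does not prove \autoref{thm:good-subsequences} at all: it is imported verbatim from \cite{AFSSV16}, so the comparison is with the argument there. Your outer structure (greedy extraction from the unassigned set, iterated Erd\H{o}s--Szekeres to get per-read-monotonicity with the $|X'|^{1/2^{k-1}}$ bound, and the summation giving $t$) is the right skeleton, and your observation that per-read-monotonicity alone does not imply $k$-regular interleaving (your example $x_1,x_2,x_1,x_3,x_2,x_3$) is correct and is exactly the crux. The genuine gap is the step you yourself flag: you assert that the $\binom{k}{2}$ pairwise refinements cost only a multiplicative $\exp(O(k^2))$, but the tools you name cannot deliver that. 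A per-pair refinement done as ``another application of Erd\H{o}s--Szekeres (or Dilworth) on interval endpoints'' is an \emph{extraction}, and such extractions lose a square root of the current size, not a constant factor. Composing $\binom{k}{2}$ square-root losses on top of the $k-1$ monotonization steps leaves a set of size about $|X'|^{1/2^{k-1+\binom{k}{2}}}$, hence $t$ of order $n^{1-1/2^{\Theta(k^2)}}$; already for $k=2$ this gives $t \approx n^{3/4}$ rather than the claimed $\exp(k^2)\sqrt{n}$. So your key claim ($|Y|\ge |X'|^{1/2^{k-1}}/\exp(O(k^2))$) is true but is not established by the route you propose.

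The missing idea is that after the monotonization step one should not extract any further, but instead \emph{partition} the per-read-monotone set into a number of parts depending only on $k$, each of which is $k$-regularly-interleaving; a bounded partition multiplies the count of parts by $\exp(O(k^2))$ instead of shrinking the extracted set polynomially. Concretely, fix a pair of reads $i<j$ and consider, for each variable $x$ in the extracted set, the interval between its $i$-th and $j$-th occurrences. If the two reads are monotone in \emph{opposite} directions, these intervals are pairwise nested, and then all $i$-th occurrences precede all $j$-th occurrences, so $S^{(i,j)}$ is already $2$-regularly-interleaving with a single block --- no loss at all. If they are monotone in the \emph{same} direction, the intervals form a nesting-free family, and such a family splits into \emph{two} classes, each a sequence of pairwise-crossing blocks separated by gaps: take the greedy consecutive clique decomposition $Q_1,Q_2,\ldots$ by left endpoints and put the odd-indexed cliques in one class and the even-indexed ones in the other; nesting-freeness guarantees that cliques at distance two or more are entirely disjoint, which is exactly the block condition. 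Since $2$-regular interleaving and per-read-monotonicity are preserved under restriction to subsets, the common refinement of these at most $\binom{k}{2}$ two-class partitions has at most $2^{\binom{k}{2}}\le \exp(k^2)$ parts, each per-read-monotone and $k$-regularly-interleaving, and the largest of them has the size your extraction claim requires. With that lemma in place, your greedy outer loop, the summation, and the $\poly(k,n)$ running time all go through.
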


The following observation about the structure of per-read-monotone sequences was also made in \cite{AFSSV16}. Intuitively, it says that in a per-read-monotone sequence, increasing and decreasing subsequences cannot intersect. That is, a per-read-monotone sequence can be partitioned into subsequences, which are alternately per-read-increasing and per-read-decreasing.

\begin{proposition}
\label{prop:increasing-decreasing}
Let $S$ be a read-$k$ per-read-monotone sequence over $X=\set{x_1, \ldots, x_n}$. Then $S$ is a concatenation of $t \le k$ subsequences $S=(T_1, \ldots, T_t)$ such that:
\begin{enumerate}
\item $t \le k$.
\item There exist $1=i_1 < i_2 < i_3 < \ldots < i_{t-1} < i_t \le k$ such that for all $i_j \le c < i_{j+1}$, $S^{(c)}$ is contained in $T_j$.
\item For all odd $j$ (even, respectively), all the subsequences $S^{(c)}$ that appear in $T_j$ are monotonically increasing (decreasing, respectively).
\end{enumerate}
\end{proposition}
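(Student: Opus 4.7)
The plan is to reduce the proposition to a single ``no-interleaving'' observation: if $c < c' \in [k]$ and exactly one of $S^{(c)}, S^{(c')}$ is monotonically increasing (i.e., the two reads have opposite monotonicities), then every position in $S$ occupied by $S^{(c)}$ strictly precedes every position occupied by $S^{(c')}$. Once this is established, the decomposition $T_1,\ldots,T_t$ falls out by cutting $S$ at the boundaries of maximal runs of same monotonicity in the ``type sequence'' $(\tau_1,\ldots,\tau_k)$, where $\tau_c \in \{\mathrm{inc},\mathrm{dec}\}$ records the monotonicity of $S^{(c)}$. The number of runs is at most $k$, and since by assumption $S^{(1)} = (x_1,\ldots,x_n)$ is increasing, the first run is of type $\mathrm{inc}$; consequently the odd-indexed (resp.\ even-indexed) runs correspond exactly to increasing (resp.\ decreasing) reads, matching items~2 and~3 of the proposition.

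To prove the no-interleaving observation, let $p_c(i)$ denote the position in $S$ of the $c$-th occurrence of $x_i$. For any $c < c'$ and any $i$, the chronological order of occurrences forces $p_c(i) < p_{c'}(i)$. Suppose for contradiction that $p_{c'}(j) \le p_c(i)$ for some indices $i,j$. I handle the case where $S^{(c)}$ is increasing and $S^{(c')}$ is decreasing (the other case is entirely symmetric). From $p_{c'}(j) \le p_c(i) < p_{c'}(i)$ and the decreasing property of $S^{(c')}$ one deduces $j > i$; the increasing property of $S^{(c)}$ then yields $p_c(j) > p_c(i) \ge p_{c'}(j)$, contradicting $p_c(j) < p_{c'}(j)$.

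With the observation in hand, define $1 = i_1 < i_2 < \cdots < i_t$ to be the starting indices of the maximal monotonicity-runs in $(\tau_1,\ldots,\tau_k)$, and set $i_{t+1} := k+1$. For each $j$, let $T_j$ be the contiguous subsequence of $S$ consisting of all positions of $S^{(c)}$ with $i_j \le c < i_{j+1}$. For any $c$ in the $j$-th run and $c'$ in the $(j+1)$-th run we have $c < c'$ and $\tau_c \neq \tau_{c'}$, so the observation guarantees that every position of $S^{(c)}$ precedes every position of $S^{(c')}$; iterating this across runs shows that all positions of $T_j$ come before all positions of $T_{j+1}$. Since every position of $S$ lies in some $S^{(c)}$ and the runs partition $[k]$, the $T_j$ form a contiguous decomposition $S = (T_1,\ldots,T_t)$, and all three claimed properties follow immediately from the construction.

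The main obstacle is the no-interleaving observation; the rest of the argument is straightforward bookkeeping, and the parameter $t \le k$ is automatic from the run-decomposition of a length-$k$ sequence.
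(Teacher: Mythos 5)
Your proof is correct: the ``no-interleaving'' observation (reads of opposite monotonicity cannot overlap in $S$, proved via the positions $p_c(i)$ and the chain $p_{c'}(j)\le p_c(i)<p_c(j)<p_{c'}(j)$) is exactly the right crux, and cutting at the maximal runs of the type sequence, together with the standing assumption that $S^{(1)}$ is increasing, yields all three items with $t\le k$. The paper itself states this proposition without proof, importing it from \cite{AFSSV16}, and your argument is essentially the same decomposition-by-alternating-monotonicity reasoning used there, so there is nothing further to reconcile.
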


\subsection{The Impagliazzo-Nisan-Wigderson Generator}
\label{sec:INW}

As mentioned in \autoref{sec:related}, we use the INW generator as an important tool in our construction. Here we cite their specific construction which we use, which can handle multiple reads of the input, as long as the ``total communication'' is bounded.

\begin{theorem}[\cite{INW94}, Theorem 3]
\label{thm:INW-gen}
There exists a generator $G^{\INW}_{n,d,\eps}:\set{0,1}^s \to \set{0,1}^n$ which $\eps$-fools every width $w$ oblivious branching program, in which the reading order can be simulated by a Turing machine head which visits every cell at most $d$ times. The seed length $s$ is
$O(\log n \cdot (d \log w + \log(n/\eps)))$.
\end{theorem}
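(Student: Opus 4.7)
My plan is to follow the recursive sampler-based construction of \cite{INW94}. Assume for simplicity that $n = 2^\ell$. Define $G_0 : \set{0,1} \to \set{0,1}$ to be the identity, and for $\ell \ge 1$ set
\[
G_\ell(x,h) = \bigl( G_{\ell-1}(x),\; G_{\ell-1}(E(x,h)) \bigr),
\]
where $E : \set{0,1}^{s_{\ell-1}} \times \set{0,1}^{r} \to \set{0,1}^{s_{\ell-1}}$ is an averaging sampler (concretely, the neighbor map of a spectral expander, or equivalently an extractor) whose parameters will be chosen below. The seed length satisfies $s_\ell = s_{\ell-1} + r$, hence $s = O(\log n) \cdot r$, and the only remaining task is to bound $r$.

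The key structural observation is a cut-communication bound. Let $B$ be an oblivious width-$w$ branching program whose variable-access sequence is realized by a Turing machine head visiting every cell at most $d$ times. For any boundary splitting the tape into a left block $L$ and a right block $R$, the head crosses this boundary at most $2d$ times during the entire execution (at most $d$ times in each direction, since every crossing corresponds to a visit to one of the two boundary cells). Each crossing is determined by at most $\log w$ bits of branching program state. Consequently, once the bits outside one side are fixed, the output of $B$ depends on the bits inside that side only through a ``transcript'' taking at most $w^{2d}$ possible values.

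With this bound in hand, the recursive-step analysis proceeds exactly as in \cite{INW94}. Fix $B$ and a split at the midpoint corresponding to level $\ell$. For each fixed $x$, the function $y \mapsto B(G_{\ell-1}(x), G_{\ell-1}(y))$ depends on $y$ only through a summary of at most $2d\log w$ bits, and is thus a $[0,1]$-valued function taking at most $w^{2d}$ distinct values. Choosing $E$ to be an averaging sampler with error $\eps'$ against such functions suffices: standard expander/extractor constructions achieve this with $r = O(d\log w + \log(1/\eps'))$. A hybrid argument over the $O(n)$ splits of the recursion tree accumulates total error $O(n\cdot \eps')$, and setting $\eps' = \eps/n$ yields the claimed seed length $s = O(\log n \cdot (d\log w + \log(n/\eps)))$.

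The main obstacle is making the cut-communication bound hold uniformly at every boundary appearing in the recursive construction, not merely at the top-level split. Fortunately, the hypothesis on the Turing machine head applies to \emph{every} cell, so for any boundary separating two adjacent cells, the head crosses it at most $2d$ times, and the argument goes through with the same parameter $d$ irrespective of where in the recursion tree the cut is placed.
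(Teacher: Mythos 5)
This theorem is not proved in the paper at all; it is imported verbatim as Theorem 3 of \cite{INW94}, so there is no internal proof to compare against. Your sketch is a faithful reconstruction of the INW argument itself --- the recursive expander/sampler construction, the observation that a head visiting each cell at most $d$ times crosses any tape boundary at most $2d$ times and hence induces at most $2d\log w$ bits of communication across every cut arising in the recursion, and the per-node error accounting with $\eps'=\eps/n$ --- and it yields exactly the stated seed length, so it matches the cited source's approach.
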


\subsection{Combining Generators}
For an oblivious branching program $B$ over the variable set $X$, a subset $Z \subseteq X$, and a 
bit vector $\vb \in \set{0,1}^{\abs{Z}}$, let $B|_{Z=\vb}$ denote 
the branching program obtained by fixing the variables in $Z$ according to the values given by $\vb$.
Observe that for all $\vb$, $B|_{Z=\vb}$ is an oblivious branching program over the variable set $X \setminus Z$.
\begin{lemma}
\label{lem:hybrid}
Let $B$ be an oblivious branching program of width $w$ over the variable set $X=\set{x_1, \ldots, x_n}$.
Let $Y \subseteq X$ be such that $|Y|=m$ and let $Z := X \setminus Y$. 
Let $D$ be a distribution on $\{0,1\}^m$ that $\eps$-fools $B|_{Z=\vb}$, for all $\vb \in \{0,1\}^{n-m}$,
and let $D'$ be \emph{any} distribution on $\{0,1\}^{n-m}$.
Denote $\mu_1 = U_m^Y \times D'^{Z}$ and $\mu_2 = D^Y \times D'^{Z}$. Then, it holds that
\[
\abs{\Pr_{x \sim \mu_1} [ B(x)=1] - \Pr_{x \sim \mu_2} [ B(x)=1]} \le \eps
\]
\end{lemma}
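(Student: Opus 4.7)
The plan is to prove this by a straightforward conditioning argument on the $Z$-coordinates, exploiting the fact that both $\mu_1$ and $\mu_2$ agree on $Z$ (they both use $D'$ independently on $Z$) and differ only on $Y$.

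First, I would rewrite each of the two probabilities by conditioning on the realization $\vb \in \{0,1\}^{n-m}$ of the $Z$-part. Because both $\mu_1$ and $\mu_2$ sample the $Y$- and $Z$-parts independently, we have
\[
\Pr_{x \sim \mu_1}[B(x)=1] = \mathbb{E}_{\vb \sim D'}\!\left[\Pr_{\vy \sim U_m}[B|_{Z=\vb}(\vy)=1]\right],
\]
and similarly
\[
\Pr_{x \sim \mu_2}[B(x)=1] = \mathbb{E}_{\vb \sim D'}\!\left[\Pr_{\vy \sim D}[B|_{Z=\vb}(\vy)=1]\right].
\]
Here I am using the observation (already made in the statement of the lemma) that for every fixing $\vb$ of $Z$, $B|_{Z=\vb}$ is an oblivious branching program over the variables $Y$, and by hypothesis $D$ $\eps$-fools each such program.

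Next, I would subtract and pull the expectation over $\vb$ outside the absolute value using the triangle inequality for expectations:
\[
\left|\Pr_{\mu_1}[B=1] - \Pr_{\mu_2}[B=1]\right|
\le \mathbb{E}_{\vb \sim D'}\!\left[\,\left|\Pr_{\vy \sim U_m}[B|_{Z=\vb}(\vy)=1] - \Pr_{\vy \sim D}[B|_{Z=\vb}(\vy)=1]\right|\,\right].
\]
By the assumed $\eps$-fooling property of $D$ against $B|_{Z=\vb}$, the integrand is at most $\eps$ for every $\vb$, hence the expectation is at most $\eps$, completing the proof.

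There is essentially no obstacle here: the key conceptual point is just that the two distributions $\mu_1$ and $\mu_2$ share the marginal $D'$ on $Z$, so one can couple them by the same $\vb$ and invoke the fooling hypothesis pointwise. The only small care needed is to note that $D'$ is allowed to be arbitrary (no independence from $D$ is assumed nor needed inside $\mu_1$ or $\mu_2$ separately beyond the product structure given in the statement), and that $B|_{Z=\vb}$ remains an oblivious branching program of width at most $w$ on the variables of $Y$, which is why the fooling assumption applies as stated.
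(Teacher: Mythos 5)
Your proof is correct and is essentially identical to the paper's: both condition on the value $\vb$ of the $Z$-coordinates (shared marginal $D'$), express each probability as an average over $\vb$ of $\Pr[B|_{Z=\vb}(\vy)=1]$ under $U_m$ and $D$ respectively, and then apply the triangle inequality together with the pointwise $\eps$-fooling hypothesis. The only cosmetic difference is that you write the averaging as an expectation over $\vb \sim D'$ while the paper writes it as a weighted sum over $\vb$.
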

\begin{proof}
From the lemma hypothesis,
\begin{equation}
\label{eq:Yfools}
\abs{\Pr_{y \sim U_m}[B|_{Z=\vb}(y)=1] - \Pr_{y \sim D}[B|_{Z=\vb}(y)=1]} \le \eps,
\end{equation}

We observe that the distribution of $B|_{Z=\vb}(y)$ where $y$ is chosen according to $U_m$ (or $D$, respectively) is the same as the marginal distribution of $B(x)$ conditioned on $Z=\vb$, where $x$ is chosen from $\mu_1$ (or $\mu_2$, respectively).

Under these notations,
\[
\Pr_{x \sim \mu_1} [ B(x)=1] = \sum_{\vb} \Pr_{x \sim U_m}[B(x)=1 | Z=\vb] \cdot \Pr[Z=\vb]
=  \sum_{\vb} \Pr_{y \sim U_m}[B|_{Z=\vb}(y)=1] \cdot \Pr[Z=\vb],
\]
and similarly,
\[
\Pr_{x \sim \mu_2} [ B(x)=1] = \sum_{\vb} \Pr_{y \sim D}[B|_{Z=\vb}(y)=1] \cdot \Pr[Z=\vb].
\]
Thus, using \eqref{eq:Yfools} it follows that
\begin{align*}
\abs{\Pr_{x \sim \mu_1} [B(x)=1] - \Pr_{x \sim \mu_2} [ B(x)=1]} & = 
\abs{  \sum_{\vb} \left( \Pr_{y \sim U_m}[B|_{Z=\vb}(y)=1] - \Pr_{y \sim D}[B|_{Z=\vb}(y)=1] \right)\cdot \Pr[Z=\vb]} \\ &
\le
\sum_{\vb} \abs{ \Pr_{y \sim U_m}[B|_{Z=\vb}(y)=1] - \Pr_{y \sim D}[B|_{Z=\vb}(y)=1] } \cdot  \Pr[Z=\vb]  \\
& \le \eps. \qedhere
\end{align*}
\end{proof}

\section{Pseudorandom Generator for Read-$k$ Oblivious Branching Programs}
\label{sec:readk}

We begin by showing that the generator $G^{\INW}$ from \autoref{thm:INW-gen} is pseudorandom against read-$k$ oblivious branching programs that read their inputs in a per-read-monotone and $k$-regularly-interleaving fashion. To that end, we show that such sequences satisfy the properties required by the theorem.

Recall from \autoref{prop:increasing-decreasing}, that any read-$k$, per-read-monotone sequence is a concatanation of 
subsequences which are alternately per-read-increasing and per-read-decreasing. 
The following Lemma from \cite{AFSSV16} says that in a per-read-increasing and $k$-regularly interleaving sequence there are no upward jumps (the case of per-read-decreasing is analogous). 

\begin{lemma}
\label{lem:reg-interleaving-structure}
Let $S$ be a read-$k$, per-read-increasing, and $k$-regularly-interleaving sequence over $X=\set{x_1, \ldots, x_n}$.
 Let $\ell \in [kn]$ be an integer, and suppose that $x_i$ appears in the $\ell$-th position in $S$, and for some $j > i$, $x_j$ appears in position $\ell+1$. Then $j=i+1$.
\end{lemma}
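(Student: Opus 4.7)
The plan is to identify the reads $c, c' \in [k]$ for which position $\ell$ is the $c$-th occurrence of $x_i$ in $S$ and position $\ell+1$ is the $c'$-th occurrence of $x_j$, and then split by cases on whether $c = c'$. Throughout, I will use that per-read-increasing together with the WLOG convention forces $S^{(c)} = (x_1, \ldots, x_n)$ for every $c \in [k]$. The easy case is $c = c'$: positions $\ell$ and $\ell+1$ are consecutive in $S$, so they remain consecutive in $S^{(c)}$, and since $S^{(c)} = (x_1, \ldots, x_n)$, adjacent elements differ by one in index, yielding $j = i+1$.

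If $c \neq c'$, I would apply the $2$-regularly-interleaving property of $S^{(c,c')}$, which produces a partition $X = X_1 \sqcup \cdots \sqcup X_t$ such that $S^{(c,c')}$ is the concatenation of blocks $S^{(c,c')}_1, \ldots, S^{(c,c')}_t$, and each block $S^{(c,c')}_m$ consists of the $\min(c,c')$-th occurrences of $X_m$ in increasing order (by per-read-increasing) followed by the $\max(c,c')$-th occurrences of $X_m$ also in increasing order. Positions $\ell$ and $\ell+1$ remain consecutive in $S^{(c,c')}$, so tracking the possible transitions inside this block structure constrains what $x_i$ and $x_j$ can be. In the subcase $c < c'$, the only way for a $c$-th occurrence to be immediately followed by a $c'$-th occurrence in $S^{(c,c')}$ is for both to lie in the same block $X_m$, with $x_i = \max(X_m)$ and $x_j = \min(X_m)$, giving $x_j \le x_i$ and contradicting $j > i$. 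Hence this subcase cannot arise under the hypothesis.

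The substantive subcase is $c > c'$. The same analysis of transitions forces $x_i$ to be the last element of the $c$-part of some block $X_m$, so $x_i = \max(X_m)$, and $x_j$ to be the first element of block $X_{m+1}$ (which begins with its $c'$-part), so $x_j = \min(X_{m+1})$. Now I would assume for contradiction that $j > i+1$, so that $x_{i+1}$ satisfies $\max(X_m) < x_{i+1} < \min(X_{m+1})$ and therefore lies in some block $X_{m''}$ with $m'' \notin \{m, m+1\}$. Comparing the $c$-th occurrence of $x_{i+1}$ (which sits in block $X_{m''}$ of $S^{(c,c')}$) with that of $x_i$ (which sits in block $X_m$) inside the increasing sequence $S^{(c)} = (x_1, \ldots, x_n)$ yields a contradiction either way: if $m'' < m$, then $x_{i+1}$'s $c$-th occurrence precedes $x_i$'s in $S^{(c)}$, contradicting $i+1 > i$; if $m'' > m+1$, then it follows every $c$-th occurrence of $X_{m+1}$ in $S^{(c)}$, forcing $x_{i+1} > \max(X_{m+1}) \ge x_j$, contradicting $x_{i+1} < x_j$. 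Hence $j = i+1$.

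The main obstacle is the bookkeeping in the $c > c'$ subcase: one must simultaneously keep track of the positions of $x_i$, $x_j$, and $x_{i+1}$ inside three related sequences ($S$ itself, $S^{(c,c')}$ with its block decomposition, and $S^{(c)}$), and convert block-containment into index inequalities using the linear order on $X$. Once this setup is in place, the $c = c'$ case and the impossibility of $c < c'$ are both routine.
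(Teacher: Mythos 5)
The paper does not prove this lemma at all --- it is imported verbatim from the cited work of Anderson et al.\ (AFSSV16) --- so there is no in-paper argument to compare against; judged on its own, your proof is correct and complete. The case split on whether positions $\ell$ and $\ell+1$ belong to the same read, the observation that consecutiveness in $S$ persists in the subsequences $S^{(c)}$ and $S^{(c,c')}$, the transition analysis in the block decomposition of the $2$-regularly-interleaving sequence $S^{(c,c')}$ (first$\to$second only at the junction inside a block, second$\to$first only at a block boundary), and the final contradiction obtained by locating the $c$-th occurrence of $x_{i+1}$ relative to blocks $X_m$ and $X_{m+1}$ in the increasing sequence $S^{(c)}$ all check out, and this is exactly the kind of structural argument the cited source uses. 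The only implicit assumption is that the parts $X_m$ are nonempty (so that ``block $m+1$'' is the next nonempty block); this is harmless, since empty parts can be discarded without affecting the decomposition, and the same dichotomy $m''<m$ or $m''>m+1$ still yields the contradiction.
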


We use this lemma to show that a per-read-monotone and $k$-regularly-interleaving sequence satisfies the properties required by \autoref{thm:INW-gen}.

\begin{lemma}
\label{lem:good-for-INW}
Let $S$ be a read-$k$, per-read-monotone, and $k$-regularly-interleaving sequence over $X=\set{x_1, \ldots, x_n}$. Then, when modelling the variable access of $S$ as a Turing machine head, the head visits every cell at most $2k$ times.
\end{lemma}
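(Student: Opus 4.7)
The plan is to decompose $S$ using \autoref{prop:increasing-decreasing} and bound the head's visits within each per-read-monotone piece separately. Concretely, write $S = (T_1, T_2, \ldots, T_t)$ with $t \le k$ as in \autoref{prop:increasing-decreasing}, with each $T_j$ per-read-monotone and the directions alternating. Let $k_j := i_{j+1} - i_j$ denote the number of occurrences of each variable in $T_j$, so that $\sum_{j=1}^t k_j = k$. Each $T_j$ inherits from $S$ the property of being $k_j$-regularly-interleaving, since for any $a \ne b \in [k_j]$ the subsequence $T_j^{(a,b)}$ equals $S^{(i_j+a-1, i_j+b-1)}$, which is $2$-regularly-interleaving by assumption on $S$.

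The main claim I would prove is that the head visits every cell at most $2k_j - 1$ times during the portion of the trajectory corresponding to $T_j$. Assuming this claim, I would observe that the transitions between consecutive pieces contribute no extra motion: if $T_j$ is per-read-increasing, then since $T_j^{(1)}$ and $T_j^{(k_j)}$ are both the identity permutation $(x_1, \ldots, x_n)$ (and the $1$st occurrence of $x_1$ in $T_j$ precedes, while the $k_j$-th occurrence of $x_n$ follows, every other element of $T_j$ by chaining the per-read-increasing inequalities across levels), the piece $T_j$ begins at cell $1$ and ends at cell $n$; the per-read-decreasing case is symmetric with endpoints at $n$ and $1$. Since \autoref{prop:increasing-decreasing} forces directions to alternate, consecutive pieces $T_j$ and $T_{j+1}$ meet at the same boundary cell and no head movement is required at the transition. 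Summing the per-piece bounds then gives at most $\sum_{j=1}^t (2k_j - 1) = 2k - t \le 2k$ visits to any fixed cell.

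To prove the per-piece claim, fix a piece $T_j$, assume it is per-read-increasing (the decreasing case is symmetric), and fix a cell $i$. Visits to $i$ during $T_j$ decompose as $k_j$ reads plus the number of \emph{pass-throughs} at $i$, where a pass-through is a move whose cell-range strictly contains $i$. By \autoref{lem:reg-interleaving-structure} applied to $T_j$, every forward move in $T_j$ advances by exactly one cell, so pass-throughs can arise only from backward moves. I would then bound the number of such backward pass-throughs by $k_j - 1$ as follows. Partition the trajectory of $T_j$ at the $k_j$ reads of $x_i$, obtaining $k_j+1$ intervals. Before the first read of $x_i$ the head, starting at cell $1$ and moving right one cell at a time, stays within $\{1, \ldots, i-1\}$, so no backward move here can strictly span $i$. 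After the last read of $x_i$ the head still must perform the $k_j$-th reads of $x_{i+1}, \ldots, x_n$ (which come later in $T_j$ by per-read-increasingness), but any backward move strictly spanning $i$ would leave the head below $i$ with no way to reach cells $> i$ without re-reading $x_i$. Within each of the $k_j - 1$ intermediate intervals, at most one backward move strictly spans $i$: once such a move places the head at some cell $< i$, the only route back to cell $i$ is through unit forward steps, and the first arrival at $i$ is the next read of $x_i$, which terminates the interval.

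The most delicate step is this last ``between consecutive reads'' case, which hinges on the unit-length forward-move consequence of \autoref{lem:reg-interleaving-structure} to rule out multiple oscillations of the head across cell $i$ within a single inter-read interval.
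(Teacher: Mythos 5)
Your proposal is correct and follows essentially the same route as the paper: decompose via \autoref{prop:increasing-decreasing} into alternating monotone pieces meeting at cells $1$ and $n$, count reads separately from pass-throughs, use \autoref{lem:reg-interleaving-structure} to rule out forward pass-throughs, and charge backward pass-throughs to occurrences of $x_i$. Your interval-by-interval counting between consecutive reads of $x_i$ gives the marginally sharper per-piece bound $2k_j-1$ instead of the paper's $2k_j$, but the underlying argument is the same.
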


\begin{proof}
We first argue that it is enough to consider the case when we the given sequence is per-read-increasing.
From \autoref{prop:increasing-decreasing}, 
it follows that $S$ is concatenation of $t$ subsequences $S=(T_1,\dots,T_t)$ such that 
for each $r \in [t]$, $T_r$ is a read-$k_r$ and $k_r$-regularly-interleaving sequence over $X$
for some $k_1,\dots,k_t$ with $k_1+k_2+\cdots + k_t = k$.
Moreover, for all odd $r$ (even, respectively), $T_r$ is per-read-increasing (decreasing, respectively). 
In particular, this means that for all odd $r$, $T_r$ starts with $x_1$ and ends with $x_n$.
On the other hand for all even $r$, $T_r$ starts with $x_n$ and ends with $x_1$.
Thus, when moving from $T_r$ to $T_{r+1}$ the Turing machine head does not visit any new cell.
We claim that while traversing $T_r$, the Turing machine head visits every cell at most $2 k_r$ times. 
This would imply that while traversing $S$, the head visits every cell at most $\sum_{r=1}^t 2 k_r = 2k $ times. 

Now, consider the sequence $T_r$, which is a read-$k_r$, per-read-increasing (the decreasing case is similar), and $k_r$-regularly-interleaving sequence.
Obviously, the head needs to visit the $i$-th cell whenever $x_i$ appears in the sequence. This can happen, by assumption, at most $k_r$ times. However, it also needs to pass through $x_i$ whenever, for $j<i<h$, $x_j$ appears in the sequence and followed by $x_h$, or $x_h$ is followed by $x_j$, and thus our goal is bound the number of times these can happen.

We claim the first transition cannot happen at all in $S$. For suppose $x_j$ appears at position $\ell$, and is immediately followed by $x_h$ in position $\ell+1$. Since $h>i>j$, this contradicts \autoref{lem:reg-interleaving-structure}.

As for the second type of transition, we claim there can be at most $k_r$ of these. 
By \autoref{lem:reg-interleaving-structure}, for any  $h'>i>j'$ and for any appearance of $x_{h'}$ after $x_{j'}$ in $T_r$,
the element $x_i$ must appear in between them. 
Since $x_i$ can appear in $T_r$ at most $k_r$ times, this establishes the claim.
\end{proof}

We are now ready to present the construction of our pseudorandom generator for read-$k$ oblivious branching programs.

\begin{construction}
\label{con:generator-readk}
Let $S$ be a read-$k$ sequence over $X=\set{x_1, \ldots x_n}$, and let $Y_1,\ldots,Y_t$ be as promised by \autoref{thm:good-subsequences}. Let $n_i = |Y_i|$, and $s_i$ be the seed length of $G^{\INW}_{n_i,2k,\eps/n}(\cdot)$ as given by \autoref{thm:INW-gen}, 
that is,
$s_i = O(\log(n) \cdot (\log(n/\eps) + k \log w))$, and let $s = \sum_{i=1}^t s_i$.
Define $G^{k}_{\eps} : \set{0,1}^s \to \set{0,1}^n$, by
\[
G^{k}_{\eps}(\vy) = \left( G^{\INW}_{n_1, 2k, \eps/n}(\vy_1) \right) ^{Y_1} \times \left( G^{\INW}_{n_2, 2k, \eps/n}(\vy_2) \right) ^{Y_2} \times \cdots \times \left( G^{\INW}_{n_t, 2k, \eps/n}(\vy_t)\right)^{Y_t},
\]
where $\vy=(\vy_1, \ldots, \vy_t)$ and $\vy_i \in \set{0,1}^{s_i}$. 
\end{construction}

\begin{theorem}
\label{thm:generator-readk}
Let $S$ be a read-$k$ sequence. The generator $G^{k}_{\eps}$ from \autoref{con:generator-readk} $\eps$-fools every read-$k$ oblivious branching program which reads the variables in the order prescribed by $S$. The seed length $s$ is
\[
O \left( t \cdot \log(n) \cdot (\log(n/\eps) + k \log w) \right) = O \left( \exp(k^2) \cdot n^{1-1/2^{k-1}} \cdot \log(n) \cdot (\log(n/\eps) + k \log w) \right).
\]
\end{theorem}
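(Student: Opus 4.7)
The plan is to prove this by a standard hybrid argument along the $t$ parts $Y_1,\ldots,Y_t$ of the partition given by \autoref{thm:good-subsequences}, using \autoref{lem:hybrid} to swap one part at a time from uniform to the INW distribution.

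First, I would set up notation. Let $B$ be a read-$k$ oblivious branching program of width $w$ reading its input in order $S$, and let $D_i$ denote the distribution on $\{0,1\}^{n_i}$ produced by $G^{\INW}_{n_i,2k,\eps/n}$ applied to a uniform seed. For $i \in \{0,1,\ldots,t\}$, define the hybrid distribution
\[
H_i \;=\; D_1^{Y_1} \times \cdots \times D_i^{Y_i} \times U_{n_{i+1}}^{Y_{i+1}} \times \cdots \times U_{n_t}^{Y_t},
\]
so that $H_0 = U_n$ and $H_t$ is exactly the output distribution of $G^k_\eps$. The goal is to bound $|\Pr_{x\sim H_0}[B(x)=1]-\Pr_{x\sim H_t}[B(x)=1]|$ by $\eps$, via the triangle inequality applied to consecutive hybrids.

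The core step is bounding the gap between $H_{i-1}$ and $H_i$ by $\eps/n$. To apply \autoref{lem:hybrid} with $Y := Y_i$ and $Z := X\setminus Y_i$, I need to show that $D_i$ fools $B|_{Z=\vb}$ up to error $\eps/n$ for \emph{every} fixing $\vb$ of the variables outside $Y_i$. For any such $\vb$, the restricted program $B|_{Z=\vb}$ is an oblivious read-$k$ branching program of width at most $w$ whose reading order is precisely $S|_{Y_i}$ (with the layers labelled by elements of $Z$ collapsed, which keeps width $\le w$). By \autoref{thm:good-subsequences}, $S|_{Y_i}$ is per-read-monotone and $k$-regularly-interleaving, so by \autoref{lem:good-for-INW} its variable access can be simulated by a Turing machine head visiting every cell at most $2k$ times. \autoref{thm:INW-gen} then guarantees that $G^{\INW}_{n_i,2k,\eps/n}$ (i.e., $D_i$) $(\eps/n)$-fools $B|_{Z=\vb}$, as required. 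Then \autoref{lem:hybrid}, with $D' $ taken to be the product of $D_1,\ldots,D_{i-1},U_{n_{i+1}},\ldots,U_{n_t}$ on $Z$ (an arbitrary independent distribution, which is fine because \autoref{lem:hybrid} does not constrain $D'$), yields $|\Pr_{H_{i-1}}[B=1]-\Pr_{H_i}[B=1]|\le \eps/n$. Summing over $i\in[t]$ and using $t \le n$ gives total error $t\cdot \eps/n \le \eps$.

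For the seed length, each invocation contributes $s_i = O(\log n_i \cdot (2k\log w + \log(n_i\cdot n/\eps))) = O(\log n\cdot(k\log w + \log(n/\eps)))$, and summing over the $t \le \exp(k^2)\cdot n^{1-1/2^{k-1}}$ parts yields the bound claimed. The only conceptual subtlety is making sure that fixing variables outside $Y_i$ preserves the per-read-monotone and $k$-regularly-interleaving structure of the reading order on $Y_i$ (which is immediate from the fact that $S|_{Y_i}$ is intrinsically defined by the sequence $S$) and that width does not increase under restriction; I expect no real obstacle there, so the main ``work'' is simply chaining the lemmas in the right order.
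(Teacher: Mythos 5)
Your proposal is correct and follows essentially the same route as the paper: the identical hybrid argument over the parts $Y_1,\ldots,Y_t$, invoking \autoref{thm:good-subsequences}, \autoref{lem:good-for-INW}, and \autoref{thm:INW-gen} to show each $D_i$ fools the restricted program, and then \autoref{lem:hybrid} plus the triangle inequality with error $\eps/n$ per step. The extra remarks you add (restriction preserves width and the reading order $S|_{Y_i}$, and $t\le n$) are exactly the implicit steps in the paper's proof, so there is nothing to correct.
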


\begin{proof}
The bound of the seed length follows directly from the construction.

The proof that it indeed $\eps$-fools read-$k$ oblivious branching program is by a standard hybrid argument, using \autoref{lem:hybrid}.

Let $B$ be any branching program as stated in the theorem, and let $Y_1, \ldots, Y_t$ be the partition as described in \autoref{con:generator-readk}. Recall that $|Y_i|=n_i$. Denote by $U_{n_i}$ the uniform distribution on $\set{0,1}^{n_i}$, and by $D_i$ the distribution of $G^{\INW}_{n_i, 2k, \eps/n}(\vy_i)$, with $\vy_i$ randomly and uniformly picked from $\set{0,1}^{s_i}$.

Now observe that the distribution of a randomly and uniformly 
seeded
 $G^{k}_{\eps}$ is given by $\mu_t := D_1^{Y_1} \times D_2^{Y_2} \times \cdots \times D_t^{Y_t}$, whereas $\mu_0 := U_n = U_{n_1}^{Y_1} \times U_{n_2}^{Y_2} \times \cdots \times U_{n_t}^{Y_{n_t}}$. Similarly, for every $0 \le j \le t$, define
\[
\mu_j = D_{1}^{Y_1} \times \cdots \times D_{j}^{Y_{j}} \times U_{n_{j+1}}^{Y_{j+1}} \times \cdots \times U_{n_t}^{Y_t}.
\]
Consider any $1 \le j \le t$.
Let $Z_{j} = X \setminus Y_{j}$.
Recall that for any bit vector $\vb \in \{0,1\}^{n-n_{j}}$, the restriction $B|_{Z_{j}=\vb}$ is a width $w$ oblivious branching program 
over variables $Y_{j}$. 
From \autoref{thm:good-subsequences}, the sequence $S|_{Y_j}$ is a read-$k$, per-read-monotone, $k$-regularly interleaving sequence. 
Thus, by \autoref{lem:good-for-INW} and \autoref{thm:INW-gen}, the distribution $D_j = G^{\INW}_{n_j, 2k, \eps/n}(\vy_j)$ $\varepsilon$-fools 
the branching program $B|_{Z_{j}=\vb}$.
Now, we apply \autoref{lem:hybrid} on $B$ with $D$ as $D_j$ and $D'$ as
$D_{1}^{Y_1} \times \cdots \times D_{j-1}^{Y_{j-1}} \times U_{n_{j+1}}^{Y_{j+1}} \times \cdots \times U_{n_t}^{Y_t}$.
We get that for each $1 \le j \le t$,
\[
\abs{\Pr_{\vx \sim \mu_{j-1}} [B(\vx)=1] - \Pr_{\vx \sim \mu_{j}} [ B(\vx) = 1]} \le \varepsilon/n,
\]
which, by the triangle inequality, implies that
\[
\abs{\Pr_{\vx \sim \mu_0} [B(\vx)=1] - \Pr_{\vx \sim \mu_{t}} [ B(\vx) = 1]} \le t \cdot \varepsilon/n \le \varepsilon,
\]
as in the statement of the theorem. 
\end{proof}

\section{Pseudorandom Generator for Linear Length Oblivious Branching Programs}
\label{sec:linear-length}

Our generator for general linear length oblivious branching programs is based on the simple observation that in the generator from \autoref{sec:readk}, the seed length remains sublinear even for $k=k(n)$ which is slightly super-constant, whereas if the length of an oblivious branching program is at most $cn$, the number of variables which appear more than $k$ times is at most $\frac{c}{k}n$, which is sublinear. Thus, these variables can be just sampled uniformly.

\begin{construction}
\label{con:generator-linlength}
Let $S \in X^{cn}$ be a sequence of length $cn$ over $X=\set{x_1, \ldots x_n}$, and set  $k = k(n)= (\log \log n)/2$. A variable is said to be \emph{frequent} if it appears more than $k$ times in $S$. Let $F$ be the set of frequent variables, so we know that $|F| \le cn/k$.
Let $s_1$ be the seed length of $G^{k}_{\varepsilon}$ from \autoref{con:generator-readk}, and $s_2 = |F|$.

Define $G^{\lin} : \set{0,1}^s \to \set{0,1}^n$, by
\[
G^{\lin}(\vy) = \left( G^{k}_{\varepsilon}(\vy_1) \right) ^{X\setminus F} \times \left( \vphantom{G^{k}_{\varepsilon/2}} \vy_2 \right) ^{F} 
\]
where $\vy=(\vy_1, \vy_2)$ and $\vy_i \in \set{0,1}^{s_i}$. 
\end{construction}

\begin{theorem}
\label{thm:oblivious-lin}
Let $S \in X^{cn}$ be a sequence over $X=\set{x_1, \ldots, x_n}$ of length $cn$. The generator  $G^{\lin} : \set{0,1}^s : \set{0,1}^n$ from \autoref{con:generator-linlength} $\eps$-fools every oblivious  branching program $B$ of width $w$ that reads its variables in the order prescribed by $S$. The seed length $s$ is $O(\frac{n}{\log \log n})$ for $w=\poly(n)$.
\end{theorem}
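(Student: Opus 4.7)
The plan is to reduce to the read-$k$ generator of \autoref{sec:readk} by separating the few \emph{frequent} variables from the many \emph{infrequent} ones, and then to combine the two pieces using the hybrid Lemma~\ref{lem:hybrid}. The infrequent variables will be handled by an independent instance of $G^{k}_{\eps}$, while the frequent variables will be sampled uniformly at random; this is affordable precisely because, with $k = (\log\log n)/2$, there are only $O(n/\log\log n)$ frequent variables in a length-$cn$ sequence.

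First I would record the counting that makes the construction meaningful: since every variable in $F$ is read more than $k$ times and $|S| = cn$, we have $|F| \le cn/k = O(n/\log\log n)$. Consequently $S|_{X \setminus F}$ is a read-$\le k$ sequence over $n - |F|$ variables, which after (harmlessly) padding with dummy reads qualifies as a read-$k$ sequence to which \autoref{thm:generator-readk} applies.

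Next I would run the hybrid argument. Apply \autoref{lem:hybrid} with $Y = X \setminus F$, $Z = F$, with $D$ equal to the output distribution of $G^{k}_{\eps}$ on the subsequence $S|_{X \setminus F}$, and with $D'$ equal to the uniform distribution on $\set{0,1}^{|F|}$. The key observation is that for every bit vector $\vb \in \set{0,1}^{|F|}$, the restricted branching program $B|_{F=\vb}$ is an oblivious width-$w$ read-$k$ branching program on $X \setminus F$ whose variable-access order is exactly $S|_{X \setminus F}$. By \autoref{thm:generator-readk}, $D$ $\eps$-fools $B|_{F=\vb}$ for every such $\vb$, and the hybrid lemma then gives that $G^{\lin}$ $\eps$-fools $B$ itself.

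The seed length follows by direct calculation. We have $s_2 = |F| = O(n/\log\log n)$, and \autoref{thm:generator-readk} gives
\[
s_1 = \exp(k^2) \cdot n^{1 - 1/2^{k-1}} \cdot O\bigl(\log n \cdot (\log(n/\eps) + k \log w)\bigr).
\]
With $k = (\log\log n)/2$ we have $2^{k-1} = \Theta(\sqrt{\log n})$, so $n^{1 - 1/2^{k-1}} = n / 2^{\Theta(\sqrt{\log n})}$, while $\exp(k^2) = 2^{O((\log\log n)^2)}$; since $(\log\log n)^2 = o(\sqrt{\log n})$, the product $\exp(k^2) \cdot n^{1 - 1/2^{k-1}}$ decays as $n / 2^{\omega(\log\log n)}$, so for $w = \poly(n)$ we get $s_1 = o(n/\log\log n)$. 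The overall seed length is therefore governed by $s_2$, giving $s = O(n/\log\log n)$ as claimed. There is no serious obstacle here---the argument is essentially a bookkeeping reduction to the read-$k$ result of \autoref{sec:readk}---and the only delicate point is calibrating $k$ so that both $cn/k$ and the sublinear term $n^{1 - 1/2^{k-1}}$ are comfortably smaller than $n / \log\log n$.
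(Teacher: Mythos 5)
Your proposal is correct and follows essentially the same route as the paper: restrict the frequent variables $F$ uniformly, apply \autoref{thm:generator-readk} to fool each restriction $B|_{F=\vb}$ on the read-$k$ sequence $S|_{X\setminus F}$, and combine via \autoref{lem:hybrid}. Your explicit calibration showing $\exp(k^2)\cdot n^{1-1/2^{k-1}}\cdot\polylog(n) = o(n/\log\log n)$ for $k=(\log\log n)/2$ is just a spelled-out version of the seed-length bound the paper leaves implicit.
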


\begin{proof}
Since $S|_{X\setminus F}$ is a read-$k$ sequence,
from \autoref{thm:generator-readk}, the generator $G^{k}_{\varepsilon}$ from \autoref{con:generator-readk} $\varepsilon$-fools the branching program $B|_{F=\vb}$ for any $\vb \in \{0,1\}^{\abs{F}}$.	
Thus, from \autoref{lem:hybrid}, the generator $G^{\lin}$ from \autoref{con:generator-linlength} $\eps$-fools $B$.
The bound on the seed length follows from the seed length of \autoref{con:generator-readk}.
\end{proof}

\section*{Acknowledgment}
We thank Andrej Bogdanov for useful comments on an earlier version of this text.

\bibliographystyle{customurlbst/alphaurlpp}
\bibliography{references}

\end{document}